\newtheorem{theorem}{Theorem}
\newtheorem{definition}{Definition}
\newtheorem{example}{Example}
\newtheorem{proposition}{Proposition}
\newtheorem{remark}{Remark}
\newenvironment{proof}[1][Proof]{\emph{#1.} }{\  \hfill $\square $ \vspace{5 pt}}
\tikzset{myptr/.style={decoration={markings,mark=at position 1 with %
       {\arrow[scale=2,>=stealth]{>}}},postaction={decorate}}}
\newcommand*\samethanks[1][\value{footnote}]{\footnotemark[#1]}
\DeclareFontFamily{T1}{calligra}{}
\DeclareFontShape{T1}{calligra}{m}{n}{<->s*[1.44]callig15}{}
\DeclareMathAlphabet\mathcalligra   {T1}{calligra} {m} {n}
\begin{document}

\title{Core and stability notions in many-to-one matching markets with indifferences\thanks{
 We acknowledge financial support
from UNSL through grants 032016 and 030320, from Consejo Nacional
de Investigaciones Cient\'{\i}ficas y T\'{e}cnicas (CONICET) through grant
PIP 112-200801-00655, and from Agencia Nacional de Promoción Cient\'ifica y Tecnológica through grant PICT 2017-2355.}}


\author{Agustín G. Bonifacio\thanks{Instituto de Matem\'{a}tica Aplicada San Luis, Universidad Nacional de San
Luis and CONICET, San Luis, Argentina, and RedNIE. Emails: \texttt{abonifacio@unsl.edu.ar} (A. G. Bonifacio), 
\texttt{nmjuarez@unsl.edu.ar} (N. Juarez), \texttt{paneme@unsl.edu.ar} (P. Neme) and \texttt{joviedo@unsl.edu.ar} (J. Oviedo).} \and  
Noelia Juarez\samethanks[2] \and Pablo Neme\samethanks[2] \and Jorge Oviedo\samethanks[2]}

\date{\today}
\maketitle

\begin{abstract}


In a many-to-one matching model with responsive preferences in which indifferences are allowed, we study  three  notions of core, three notions of stability, and their relationships.  We show that (i) the core contains the stable set, (ii) the strong core coincides with the strongly stable set, and (iii) the super core coincides with the super stable set. We also show how the core  and the strong core in markets with indifferences relate to the stable matchings  of their associated  tie-breaking strict markets.

\bigskip

\noindent \emph{JEL classification:} C78; D47.\bigskip

\noindent \emph{Keywords:} Matching with indifferences; Stability; Strong Stability; Super Stability; Core; Strong Core; Super Core. 

\end{abstract}

\section{Introduction}


The core  is one of the most important solution concepts in cooperative games.  Two-sided matching models can be seen as particular cooperative games. In these models, a matching is ``dominated'' if there is a coalition of agents that prefer another matching in which each agent of the coalition is matched to agent(s) within the coalition
; and a matching is in the ``core'' if it is undominated. However,  in these two-sided games, the most studied solution concept  is not the core but  that of stability. 
Unlike in models with strict preferences,
where there is a unique concept of stability satisfied by pairs of agents, in models with indifferences there are several concepts of stability. \cite{irving1994stable} introduces the notions of strongly stable matching and super stable matching.
A matching is ``blocked'' if it can be dominated via  a coalition made of a single agent or a pair of agents (one from each side of the market); and a matching is ``stable'' if it is not blocked.  Notice that in the definition of domination (and therefore in the definition of core) every coalition plays a potential role, whereas in the definition of blocking (and therefore in the definition of stability) only singletons and pairs of agents are involved. In one-to-one matching model with strict preferences, nothing is lost by ignoring coalitions other than singletons and pairs: the core coincides with the set of stable matchings \citep[see][]{roth1985common}.  But in the more general many-to-one model, a matching can be in the core and  not be stable. In our first theorem we show that this is also the case when indifferences are allowed.

In a many-to-one matching market with responsive preferences and indifferences, we study  two weaker versions of domination (with their corresponding stronger notions of core) and their associated weaker notions of blocking (with their corresponding stronger notions of stability) and investigate their relationships.   A matching is ``weakly dominated'' if (i) there is a coalition of agents that are at least as well-off in another matching, (ii) one of them prefers this new matching, and  (iii) agents of the coalition are matched within the coalition in this new matching; and a matching is in the  ``strong core'' if it is not weakly dominated. A matching is ``weakly blocked'' if it can be weakly dominated via  a coalition made of a single agent or a pair of agents (one from each side of the market); and a matching is ``strongly stable'' if it is not weakly blocked. An even weaker notion of domination, called ``super weak domination'' can be defined by losing condition (ii) in the definition of weak domination. The associated definitions of ``super core'', ``super weak blocking'' and ``super stable matching'' are straightforward. 
In this paper, we show that (i) the strong core coincides with the set of strongly stable matchings, and (ii) the super core coincides with the set of super stable matchings.

\cite{gale1962college} show that at least one stable matching always exists, even when agents may have indifferences in their preferences.
Usually, the procedure to compute a stable matching consist of breaking ties and then applying
Gale and Shapley’s deferred acceptance algorithm. How these indifferences may be
ordered has both strategic and welfare consequences \citep[see][]{erdil2017two,abdulkadirouglu2009strategy}. On the other hand, strongly stable matchings and super stable
matchings may not exist. In the one-to-one model,
\cite{irving1994stable} presents algorithms for determining whether a strongly stable matching and/or a super
stable matching exist. In each case, if such a matching exists, the appropriate
algorithm constructs one. \cite{manlove2002structure} shows that the set of strongly stable matchings
forms a distributive lattice.  \cite{kunysz2016characterisation} present a polynomial-time algorithm
for the generation of all strongly stable matchings. They also give  an alternative prove that the set of
strongly stable matching forms a distributive lattice. In the many-to-one model, \cite{irving2000hospitals} present the first linear-time algorithm to compute super stable matchings when it exists. 
 Linear programming approaches to the study of stable and strongly stable matchings in models with indifferences are also available
\citep[see][]{kwanashie2014integer,juarez2021marriage,kunysz2018algorithm}.
Besides these three notions of stability, there are other solution concepts for models with indifferences. \cite{sotomayor2011pareto} proposes Pareto-stability as a solution concept for the one-to-one model with indifferences and the roommate model. A matching is Pareto-stable if it is  stable and Pareto optimal.  She shows that every strongly stable matching is a Pareto-stable matching and exhibits examples  of markets in which there is a Pareto-stable matching that is not strongly stable and there is a stable matching that is not Pareto-stable.

The rest of the paper is organized as follows.  Preliminaries are presented in Section \ref{Preliminaries}, where also 
the three notions of stability and the three notions of core are introduced. Our results are shown is Section \ref{Results}. After that, some conclusions are gathered in Section \ref{Concludings}.   

\section{Preliminaries}\label{Preliminaries}

Let $F$ and $W$ denote disjoint finite sets of \emph{firms} and \emph{workers}, respectively. 
Each firm $f \in F$ has a \emph{preference} relation $R_f$ over $2^W$  that is complete and transitive (i.e. a weak order). Each  worker has a \emph{preference} relation $R_w$ that is a weak order over $F \cup \{\emptyset\},$ where $\emptyset$ represents the prospect of being unemployed. Let  $R=(R_a)_{a \in F \cup W}$ denote the \emph{profile of preferences} of the agents. For each $a \in F \cup W,$ let $P_a$ and $I_a$ denote the antisymmetric and symmetric part of $R_a,$ respectively. Throughout, we assume that there is no  firm $f \in F$ such that $W' I_f \emptyset$ for some $W' \subseteq W$ and  there is no worker $w \in W$ such that $f I_w \emptyset$ for some $f \in F.$ We call this the \emph{no indifference to the empty set} assumption.\footnote{This assumption is commonly
used in the literature; see Erdil and Ergin (2008, 2017).}  For each $f \in F,$ let $q_f \geq 1$ denote the number of positions that firm $f$ has (called $f$'s \emph{quota}), and let $q=(q_f)_{f \in F}$ be the \emph{profile of quotas}. 
We assume that each firm  gives its ranking of workers individually, and orders subsets of workers in a \textit{responsive} manner. That is to say, adding ``good" workers to a set leads to a better set, whereas adding ``bad" workers to a set leads to a worse set. In addition, for any two subsets that differ in only one worker, the firm prefers the subset containing the most preferred worker. Formally,
\begin{definition}\label{def responsive}
Given $q_f,$  preference relation $R_f$ over $2^{W}$ is \textbf{responsive} if it satisfies the following conditions:\footnote{
Notice that when $w'=\emptyset,$ by \emph{no indifference to the empty set,} Conditions (i) and (ii) in Definition \ref{def responsive} imply that for each $T\subseteq{W}$ such that $\vert{T}\vert<q_{f}$ and each $w\in W \setminus T$,   $$T\cup\{w\}R_{f}T \mbox{ if and only if } wP_{f}\emptyset.$$
}
\begin{enumerate}[(i)]
\item for each $T\subseteq{W}$ such that $\vert{T}\vert>q_{f}$,   $\emptyset P_{f}T$.
\item for each $T\subseteq{W}$ such that $\vert T  \vert \leq q_{f},$  each $w'\in T\cup \{ \emptyset\}$, and each $w\in W \setminus T$,  
$$(T \setminus \{w'\})\cup \{w\}R_{f} T \mbox{ if and only if } wR_{f}w'.\footnote{Condition (ii) in Definition \ref{def responsive} is equivalent to the following: \emph{
\begin{itemize}
\item[(ii)'] for each $T\subseteq{W}$ such that $\vert T  \vert \leq q_{f},$  each $w'\in T\cup \{ \emptyset\}$, and each $w\in W \setminus T$, 
\begin{itemize}
\item[(ii.a)] $(T \setminus \{w'\})\cup \{w\}P_{f} T \mbox{ if and only if } wP_{f}w',$ and
\item[(ii.b)]$(T \setminus \{w'\})\cup \{w\}I_{f} T \mbox{ if and only if } wI_{f}w'.$
\end{itemize}
\end{itemize}
} }$$
\end{enumerate}

\end{definition}

A \emph{many-to-one (matching) market} is denoted by 
 $(F,W,R,q).$ When $q_f=1$ for each $f\in F$, the  market will be called a \emph{one-to-one (matching) market}.
A \textit{matching} $\mu$ is a mapping from  $F \cup W$ into $2^{F \cup W}$ 
 such that, for each $w\in{W}$ and each $f\in{F}$:
\begin{enumerate}[(i)]
\item  $\mu(w)\subseteq{F}$ and $\vert\mu(w)\vert \leq 1,$
\item $\mu(f)\subseteq W$ and $\vert\mu(f)\vert\leq q_{f},$
\item $\mu(w)=\{f\}$ if and only if $w\in\mu(f)$.
\end{enumerate}
Furthermore, the set of all matchings is denoted by $\mathcal{M}.$ 
Since workers are assigned to at most one firm, usually we will omit the curly brackets. For instance, instead of condition (iii) we will write: ``$\mu(w)=f$ if and only if $w\in \mu(f)$''.
 Throughout the paper we identify the market $(F,W,R,q)$ with its corresponding preference profile $R.$ 

\subsection{Stability, strong stability and super stability}

In matching models, stability is considered the main property to be satisfied by any matching. Unlike in models with strict preferences,
where there is a unique concept of stability satisfied by pairs of agents, in models with indifferences there are several concepts of stability.
In the one-to-one model, a matching is stable if each agent is matched to an acceptable partner, and
there is no firm-worker pair such that they are not matched together and
strictly prefer each other to their current partners. \cite{irving1994stable}  formulates
two other possible definitions of stability for the one-to-one model with indifferences. A matching is strongly stable if each agent is matched to an
acceptable partner, and there is no firm-worker pair such that they are not
matched together, one of them strictly prefers the other one to their current partner, and the other weakly prefers the other one to their current partner.
A matching is super stable if each agent is matched to an acceptable partner,
and there is no firm-worker pair such that they are not matched together
and weakly prefer each other to their current partners. 
 
 Now we present three types of blocking pairs, in order to formalize the different definitions of stability for the many-to-one setting.

\begin{definition}
Let $\mu $ be a matching. We say that the pair $(f,w)\in F \times W$ with $w\notin \mu(f)$ 

\begin{enumerate}[(i)]
\item \textbf{blocks} $\boldsymbol{\mu}$ whenever 
\begin{enumerate}
\item[(i.a)] $\left\vert \mu (f)\right\vert =q_{f},$ and there is $w^{\prime }$ $\in \mu (f)$ such that $wP_{f}w^{\prime }$ and $fP_{w}\mu (w)$; or

\item[(i.b)] $\left\vert\mu (f)\right\vert <q_{f},$  $wP_{f}\emptyset,$ and $fP_{w}\mu (w)$.

\end{enumerate}

\item \textbf{weakly blocks} $\boldsymbol{\mu}$ whenever
\begin{enumerate}
\item[(ii.a)] $\left\vert \mu (f)\right\vert =q_{f},$ and there is $w^{\prime }$ $\in \mu (f)$ such that either  $wR_{f}w^{\prime }$ and $fP_{w}\mu (w)$, or $wP_{f}w^{\prime }$ and $fR_{w}\mu (w)$; or 

\item[(ii.b)] $\left\vert\mu (f)\right\vert <q_{f},$  $wP_{f}\emptyset,$ and $fR_{w}\mu (w)$.

\end{enumerate}

\item \textbf{super weakly blocks} $\boldsymbol{\mu}$  whenever 

\begin{enumerate}

\item[(iii.a)] $\left\vert \mu (f)\right\vert =q_{f},$ and there is $w^{\prime } \in \mu (f)$ such that $wR_{f}w^{\prime }$ and $fR_{w}\mu (w)$; or 

\item[(iii.b)] $\left\vert\mu (f)\right\vert <q_{f},$  $wP_{f}\emptyset,$ and $fR_{w}\mu (w).$

\end{enumerate}

\end{enumerate}
\end{definition}

Now we are in a position to define the three notions of stability that we use throughout the paper. First, we present a common condition that all stability notion must satisfy. We say that a matching $\mu$ is  \textbf{individually rational} if it is not blocked by any individual agent, i.e.,  (i)  for each $w\in W$, $\mu
(w)P_{w}\emptyset $; and (ii) for each $f\in F$, $wP_{f}\emptyset $ for each $w \in \mu(f).$

\begin{definition} We say that an individually rational  matching  is
\begin{itemize}
\item[(i)] \textbf{stable} if it has no blocking pair.
\item[(ii)] \textbf{strongly stable} if it has no weakly blocking pair.
\item[(iii)] \textbf{super stable} if it has no super weakly blocking pair.
\end{itemize}
\noindent Denote  by $\mathcal{S}$, $\mathcal{SS}$ and $\mathcal{SSS}$ the set of all stable, strongly stable, and super stable matchings, respectively.
\end{definition}

The existence of stable matchings in models with indifferences
is guaranteed: by breaking ties arbitrarily, we obtain a model with strict preferences. 
A stable matching in this strict model is a stable matching in the original model with indifferences. \cite{gale1962college} show that each market with strict preferences has at least one stable matching and compute it by the deferred acceptance algorithm. On the other hand, strongly stable matchings and super stable matchings could not exist. \cite{irving1994stable} presents algorithms for determining whether a strongly stable matching and/or a super
stable matching exist. In each case, if such a matching exists, the appropriate
algorithm constructs one.

The following example shows that the super stable matching set can be a proper subset of the strongly stable matching set and, in turn, the strongly stable matching set can be a proper subset of the stable matchings set.

\begin{example}\label{ejemplo}
Let $(F,W,R,q)$ be a matching market where $F=\{f_1,f_2,f_3\}, W=\{w_1,w_2,w_3,w_4\},$  $q_{f_{1}}=2$, $q_{f_{i}}=1$ for $i=2,3 $ and the preference profile is given by:\footnote{``$R_{f_1}: w_1, w_4, [w_2, w_3]$'' indicates that $w_1 P_{f_1} w_4 P_{f_1} w_2 I_{f_1} w_3.$}
\medskip

\begin{center}\noindent\begin{tabular}{l}
$ R_{f_1}: w_1,w_4, \left[ w_2,w_3\right]  $\\
$ R_{f_2}: \left[ w_2,w_3\right] ,w_1,w_4 $\\
$ R_{f_3}:\left[ w_2,w_3\right],w_4,w_1  $\\
\end{tabular}~~~~~~~~~~~~~~~~~~~
\noindent\begin{tabular}{l}
$ R_{w_1}:f_3,f_1,f_2 $\\
$ R_{w_2}: f_1,\left[ f_2,f_3\right]   $\\
$ R_{w_3}:f_1,\left[ f_2,f_3\right]  $\\
$ R_{w_4}:f_2,f_3,f_1  $\\
\end{tabular}\medskip
\end{center}
\noindent Consider the following stable matchings:
\begin{center}
 $
\mu_1=\begin{pmatrix}
f_1 & f_2 &f_3 \\
w_2 w_3& w_4 &w_1 \\
\end{pmatrix}
,~$
$
\mu_2=\begin{pmatrix}
f_1 & f_2 &f_3 \\
w_1w_4& w_2 &w_3 \\
\end{pmatrix}
,$ and 
$
\mu_3=\begin{pmatrix}
f_1 & f_2 &f_3 \\
w_1w_2& w_4 &w_3 \\
\end{pmatrix}.
$
\end{center}

\noindent Notice that $ \mu_1 $ matches each worker to her most preferred firm. Since each worker does not have indifference in the top of her preference, there are no super  weakly blocking pairs for $\mu_1 $. Therefore, $\mu_1\in \mathcal{SSS}  $. 
Since $(f_2,w_3)$ is a super weakly blocking pair for  $\mu_2$, $\mu_2 \notin \mathcal{SSS}.$  It can be checked that $\mu_2$ does not have any strongly blocking pair, thus $\mu_2\in \mathcal{SS}  $. 
Furthermore, since $ (f_1,w_3)$ is a  weakly blocking pair for  $\mu_3$,   $\mu_3 \notin \mathcal{SS}  $.
\hfill $\Diamond$

\end{example}



\subsection{Core, strong core, and super core}

A cooperative game consists of a set of players; a set of feasible outcomes; preferences of the players over outcomes; and the ``rules of the game'', that specify which coalitions of players are empowered to enforce which outcomes \citep[see Chapter 3 in][]{roth1992two}. Every many-to-one matching market with indifferences $(F,W,R,q)$ induces a cooperative game as follows: the set of players is, of course, $F \cup W;$ the feasible outcomes are equal to the set of all possible matchings $\mathcal{M};$ the preferences over outcomes (matchings) are induced from the profile $R$ of preferences over (subsets) of agents in the straightforward way; and the rules of game specify that a coalition $C \subseteq F \cup W$ is empowered to enforce matching $\mu' \in \mathcal{M}$ if and only if $\mu'(C)=C.$  This implies that every worker in coalition $C$ is assigned to some
firm in $C$ by $\mu'$ and that every firm in coalition $C$ is assigned to some subset of workers in $%
C$ by $\mu'.$ 

The rules of the game and the preferences of the agents  allow us to define three domination relations on the outcomes of the game:

\begin{definition}
Let $\mu, \mu' \in \mathcal{M}$ and let $C\subseteq F\cup W$
be a non-empty coalition such that $\mu'(C)=C.$ 

\begin{itemize}
\item[(i)] $\boldsymbol{\mu'$ \textbf{dominates $\mu$ via} $C}$ if $\mu' (c)P_{c}\mu(c)$ for each $c\in C.$

\item[(ii)] $\boldsymbol{\mu'$ \textbf{weakly  dominates $\mu$ via}  $C}$ if $\mu'(c)R_{c}\mu(c)$  for each $c \in C,$ and there is 
$c'\in C$ such that $\mu'(c')P_{c'}\mu(c').$

\item[(iii)] $\boldsymbol{\mu'$ \textbf{super  weakly dominates $\mu$ via} $C}$ if there is $c\in C$ such that $\mu(c)\neq\mu'(c)$, and  $\mu'(c')R_{c'}\mu(c')$  for each $c' \in C.$ 
\end{itemize}
\end{definition}

\begin{definition} The set of all matchings that are not

\begin{itemize}
\item[(i)] dominated is the \textbf{core}. 

\item[(ii)] weakly dominated is the \textbf{strong core}.

\item[(iii)] super weakly dominated is the \textbf{super  core}.
 
\end{itemize}

\noindent Denote by $\mathcal{C},$ $\mathcal{C}_S,$ and $\mathcal{C}_{SS}$ the core, the strong core, and the super  core, respectively. 

\end{definition}

\noindent By definition, the core includes the strong core and the strong core includes the super core.


\section{Results}\label{Results}
In this section, unless otherwise specified, all the results apply to the many-to-one model with indifferences.
In the one-to-one setting with strict preferences, the core coincides with the strong core  \citep{roth1985common}. Next, we show that this result holds even when indifferences are allowed.

\begin{proposition}\label{propo 1 a 1 core = estables}
For any one-to-one market with indifferences, $\mathcal{C}=\mathcal{S}$.
\end{proposition}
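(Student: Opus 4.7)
The plan is to prove the two inclusions $\mathcal{S}\subseteq\mathcal{C}$ and $\mathcal{C}\subseteq\mathcal{S}$ separately, following the classical one-to-one argument of \cite{roth1985common}. Indifferences should not cause trouble here because the preference comparisons that appear in both the definition of domination (for the core) and the definition of blocking (for stability) are strict.

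For $\mathcal{C}\subseteq\mathcal{S}$, I would argue by contrapositive. Suppose $\mu$ is not stable. Then either $\mu$ fails to be individually rational, or $\mu$ admits a blocking pair $(f,w)$. In the first case, the singleton coalition consisting of the offending agent, paired with a matching in which that agent is unmatched, dominates $\mu$ and contradicts $\mu\in\mathcal{C}$. In the second case, I would take $C=\{f,w\}$ and define $\mu'$ to coincide with $\mu$ outside $C$ and to pair $f$ with $w$. Since $q_{f}=1$ in the one-to-one setting, the blocking conditions (i.a) and (i.b) translate directly into $\mu'(f)=w\,P_{f}\,\mu(f)$ and $\mu'(w)=f\,P_{w}\,\mu(w)$, so $\mu'$ dominates $\mu$ via $C$.

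For $\mathcal{S}\subseteq\mathcal{C}$, also by contrapositive, suppose some $\mu'$ dominates $\mu$ via a nonempty coalition $C$. I would pick any $c\in C$ and analyze its $\mu'$-partner. If $c=w\in W$, then in the one-to-one setting $\mu'(w)$ is either a firm $f\in C$ or $\emptyset$. If $\mu'(w)=\emptyset$, then $\emptyset\, P_{w}\,\mu(w)$, contradicting individual rationality of $\mu$. Otherwise $\mu'(f)=w$ by definition of a matching, and the domination conditions give $w\,P_{f}\,\mu(f)$ together with $f\,P_{w}\,\mu(w)$; thus $(f,w)$ blocks $\mu$, contradicting stability. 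The case $c=f\in F$ is symmetric.

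The main obstacle is not mathematical but notational: one has to keep careful track of the ``enforce'' condition $\mu'(C)=C$ and, in the second inclusion, to deal correctly with the case in which some agent of $C$ is left unmatched under $\mu'$. Once individual rationality (which is built into stability) is invoked to rule out this case, the argument reduces to the classical proof, and responsiveness of firms is used only in the trivial form $q_{f}=1$, so indifferences in the firms' rankings of subsets never need to be examined.
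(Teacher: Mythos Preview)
Your proposal is correct and follows essentially the same route as the paper: both inclusions are proved by contrapositive/contradiction, using a singleton or the pair $\{f,w\}$ as the dominating coalition for $\mathcal{C}\subseteq\mathcal{S}$, and extracting a blocking pair (after ruling out $\mu'(c)=\emptyset$ via individual rationality) for $\mathcal{S}\subseteq\mathcal{C}$. The only cosmetic difference is that the paper first argues $W\cap C\neq\emptyset$ and then picks a worker, whereas you pick an arbitrary $c\in C$ and treat the firm case as symmetric; the content is the same.
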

\begin{proof}
First, we prove that $\mathcal{C}\subseteq \mathcal{S}.$ Assume that $\mu \notin \mathcal{S}$. If $\mu $ is not an
individually rational matching, then there is $c\in F\cup W$ such that $\emptyset P_{c}\mu (c).$ Let $\mu'$ be a matching such that $\mu'(c)=\emptyset$. Thus, $\mu' $ dominates $\mu$ via $C=\{c\}.$ Now, assume that there is a blocking pair $(f,w)$ of $\mu$. That is,  $wP_{f}\mu(f) $ and $fP_{w}\mu (w).$ Let matching $\mu'$ be such that $\mu'(w)=f$. Let $C=\{f,w\}$. Obviously, $\mu'(C)=C$ and it is straightforward that $\mu'$ dominates $\mu$ via $C.$ Therefore, $\mu \notin \mathcal{C}.$
Next, we prove that $\mathcal{S}\subseteq \mathcal{C}.$ Assume that $\mu \in \mathcal{S}$ and $\mu \notin \mathcal{C}.$ Then, there are a matching $\mu'$ and a non-empty coalition $C \subseteq F\cup W$ such that $\mu' $ dominates $\mu$ via $C$. 
Take any  $c\in C.$ If $\mu'(c)=\emptyset,$ since $\mu'$  dominates  $\mu,$ we have $\emptyset=\mu'(c)P_{c}\mu (c),$ contradicting 
the individual rationality of $\mu.$ Hence, $\mu'(c)\neq \emptyset$ for each  $c\in C,$ implying $W\cap C\neq \emptyset.$  Let $w\in W\cap C$ and let $f=\mu'(w).$ Since $\mu'(C)=C$, $f\in C$. Since $\mu'$ dominates $\mu $, it follows that $f=\mu'(w)P_{w}\mu (w)$  and $w=\mu'(f)P_{f}\mu (f)$.
Thus,  $(f,w)$ blocks $\mu,$ contradicting the stability of $\mu.$ Therefore,  $\mu \in \mathcal{C}.$
\end{proof}

In the many-to-one setting, the core is no longer equal to the stable matching set, only one inclusion holds:

\begin{theorem}\label{stable in core}
$\mathcal{S} \subseteq \mathcal{C}$.
\end{theorem}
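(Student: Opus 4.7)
The plan is to argue by contradiction. I would assume $\mu \in \mathcal{S}$ but $\mu \notin \mathcal{C}$, witnessed by a matching $\mu'$ and a non-empty coalition $C \subseteq F \cup W$ with $\mu'(C) = C$ and $\mu'(c) P_c \mu(c)$ for every $c \in C$. The goal is to extract from this data a blocking pair of $\mu$, contradicting stability.

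First, I would verify $W \cap C \neq \emptyset$. If $C \subseteq F$, then for each $f \in C$ the enforceability condition $\mu'(C) = C$ forces $\mu'(f) \subseteq W \cap C = \emptyset$, so $\mu'(f) = \emptyset$. But individual rationality of $\mu$, together with responsiveness (removing an acceptable worker from a subset of size at most $q_f$ is a weak deterioration), yields $\mu(f) R_f \emptyset$, contradicting $\mu'(f) P_f \mu(f)$. Next, I would fix any $w \in W \cap C$ and set $f = \mu'(w)$; individual rationality of $\mu$ precludes $f = \emptyset$, and $\mu'(C) = C$ places $f$ in $F \cap C$, so $\mu'(f) P_f \mu(f)$ with $w \in \mu'(f) \setminus \mu(f)$ (the non-membership following from $f P_w \mu(w)$, since $w \in \mu(f)$ would give $\mu(w) = f$ and hence $f P_w f$).

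The main step, and the main obstacle, is the following responsive-preferences observation, which I would prove by a chain of single-element swaps: whenever $\mu'(f) P_f \mu(f)$ and $\mu(f)$ consists only of acceptable workers, either \textbf{(i)} $|\mu(f)| < q_f$ and some $w^* \in \mu'(f) \setminus \mu(f)$ is acceptable to $f$, or \textbf{(ii)} $|\mu(f)| = q_f$ and some $w^* \in \mu'(f) \setminus \mu(f)$ satisfies $w^* P_f w'$ for some $w' \in \mu(f)$. In each case the contrapositive is established by transforming $\mu'(f)$ into $\mu(f)$ via a chain of weakly $R_f$-improving single swaps, invoking the responsiveness condition (ii) at every step. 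Concretely, in case (i) one first deletes each unacceptable worker from $\mu'(f)$ and then inserts each missing worker from $\mu(f) \setminus \mu'(f)$; in case (ii) one uses $|\mu'(f)| \leq q_f = |\mu(f)|$ to pair injectively each $w^* \in \mu'(f) \setminus \mu(f)$ with a distinct $w' \in \mu(f) \setminus \mu'(f)$, replaces each such $w^*$ by its partner (weakly improving under the hypothesis $w' R_f w^*$), and finally inserts any remaining workers in $\mu(f) \setminus \mu'(f)$. Both chains would yield $\mu(f) R_f \mu'(f)$, contradicting $\mu'(f) P_f \mu(f)$.

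With the observation in hand, the chosen $w^* \in \mu'(f) \setminus \mu(f)$ lies in $\mu'(f) \subseteq C$, hence $\mu'(w^*) = f$ and the dominance inequality gives $f P_{w^*} \mu(w^*)$. Combined with clause (i) or (ii), the pair $(f, w^*)$ blocks $\mu$ via sub-definition (i.b) or (i.a), respectively, contradicting $\mu \in \mathcal{S}$.
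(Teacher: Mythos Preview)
Your argument is correct and follows essentially the same route as the paper's proof: assume a dominating coalition, locate a firm $f\in C$ with $\mu'(f)P_f\mu(f)$, and use responsiveness via a chain of single-worker swaps to produce a worker $w^*\in\mu'(f)\setminus\mu(f)$ that, together with $f$, blocks $\mu$. The only cosmetic difference is the case split---you divide according to whether $|\mu(f)|=q_f$ or $|\mu(f)|<q_f$ (which lines up directly with clauses (i.a)/(i.b) of the blocking definition), whereas the paper divides according to whether $|\mu'(f)|\le|\mu(f)|$ or $|\mu'(f)|>|\mu(f)|$; both lead to the same conclusion.
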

\begin{proof}
Let $\mu \in \mathcal{S}$ and assume $\mu \notin \mathcal{C}.$ Then, there are a non-empty coalition $C \subseteq F \cup W$ and a matching $\mu' \in \mathcal{M}$ such that  $\mu'(C)=C$ and $\mu'(c) P_c \mu(c)$ for each $c \in C.$  First, we claim that $C\cap F \neq \emptyset.$ Otherwise,  $C\cap F = \emptyset$ implies, for each $w \in C,$ that $\mu'(w)=\emptyset$ (since $\mu'(C)=C$), and therefore $\emptyset=\mu'(w) P_w \mu(w),$ contradicting the individual rationality of $\mu.$ So take any $f\in C$. Since $\mu'$ dominated $\mu$ via $C$, $\mu'(f)P_f \mu(f).$ Now we claim that $\mu'(f)\setminus \mu(f)\neq \emptyset.$  Otherwise, $\mu'(f)\setminus \mu(f)=\emptyset$, $\mu'(f)P_f \mu(f)$ imply $\mu'(f)\subset\mu(f).$ Let $\mu(f)\setminus \mu'(f):= \{w_1,\ldots,w_\ell\}$. By individual rationality of $\mu$, $w_iP_f\emptyset$ for each  $i=1,\ldots,\ell.$ By responsiveness of $R_f$, 
$$
\mu(f)=\mu'(f) \cup\{w_1,\ldots,w_\ell\} P_f \mu'(f) \cup\{w_1,\ldots,w_{\ell-1}\}P_f \cdots P_f \mu'(f) \cup\{w_1\} P_f \mu'(f).
$$
Thus $\mu(f) P_f \mu'(f),$ contradicting $\mu'(f)P_f \mu(f)$. Therefore, $\mu'(f)\setminus \mu(f)\neq \emptyset.$ Define 
$$
\mu'(f)\setminus \mu(f):= \{w'_1,\ldots,w'_k\}
$$
where subscripts are chosen so that $w'_i R_f w'_{i+1}$ for each $i=1\ldots,k-1,$ and  
$$
\mu(f)\setminus \mu'(f):= \{w_1,\ldots,w_\ell\}
$$
where subscripts are chosen so that $w_i R_f w_{i+1}$ for each $i=1\ldots,\ell-1.$
Now, there are two cases to consider:
\begin{itemize}
\item[$\boldsymbol{1}.$]$\boldsymbol{|\mu'(f)|\leq|\mu(f)|}$. Then, $|\mu'(f)\setminus \mu(f)|\leq|\mu(f)\setminus \mu'(f)|.$ First, we claim that  
\begin{equation}\label{chorizo 0 lemma 1}
w'_1 P_f w_\ell.
\end{equation} Otherwise, $$
w_1 R_f  \cdots R_fw_\ell R_f w'_1 R_f  \cdots R_f w'_k$$ 
and, by responsiveness of $R_f$ and the individual rationality of $\mu$, since $\ell \geq k$ 
\begin{equation}\label{chorizo 1 lemma 1}
\mu(f) P_f \mu(f) \setminus \{w_\ell \} P_f \cdots P_f \mu(f) \setminus \{w_{k+1},\ldots,w_\ell \}.
\end{equation}
Define $\widetilde{W}:=\mu'(f)\cap \mu(f).$ Notice that $\mu(f) \setminus \{w_{k+1},\ldots,w_\ell \} = \widetilde{W}\cup \{w_1,\ldots,w_k\}.$ Again, by responsiveness of $R_f$, 
\begin{equation}\label{chorizo 2 lemma 1}
\widetilde{W}\cup \{w_1,\ldots,w_k\} R_f \widetilde{W}\cup \{w_1,\ldots,w'_k\} R_f \cdots R_f \widetilde{W}\cup \{w'_1\ldots,w'_k\}.
\end{equation}
Notice that $\widetilde{W}\cup \{w'_1,\ldots,w'_k\}= \mu'(f).$ Thus, by \eqref{chorizo 1 lemma 1} and \eqref{chorizo 2 lemma 1}, $\mu(f)R_f \mu'(f)$. This contradicts that  $\mu'(f) P_f \mu(f)$ and, therefore, \eqref{chorizo 0 lemma 1} holds and the claim is proven. Since $f\in C$ and $\mu'(C)=C$, $w'_1\in C$. As $\mu'$  dominates $\mu$ via $C$, 
\begin{equation}\label{chorizo 3 lemma 1}
f=\mu'(w'_1) P_{w'_1} \mu(w'_1).
\end{equation} 
If $|\mu(f)|=q_f$, by \eqref{chorizo 0 lemma 1} and \eqref{chorizo 3 lemma 1},   $(f,w'_1)$ is a blocking pair for $\mu$.  If $|\mu(f)|<q_f$, by \eqref{chorizo 0 lemma 1}  we have $w'_1 P_f w_\ell$, and by the individual rationality of $\mu$, $w_\ell P_f \emptyset.$ Hence, $w'_1 P_f \emptyset$ and, therefore,   together with \eqref{chorizo 3 lemma 1},   $(f,w'_1)$ is a  blocking pair for $\mu$.

\item[$\boldsymbol{2}.$] $\boldsymbol{|\mu'(f)|>|\mu(f)|}$. First, we claim that  
\begin{equation}\label{chorizo 4 lemma 1}
w'_1 P_f \emptyset.
\end{equation} Otherwise, $$
\emptyset P_f w'_1 R_f  \cdots R_f w'_k.$$ 
Define $\widetilde{W}:=\mu'(f)\cap \mu(f).$ By responsiveness of $R_f$, since $\ell < k$ 
\begin{equation}\label{chorizo 5 lemma 1}
\mu(f) P_f (\mu(f) \setminus \{w_1 \})\cup\{w_1'\} P_f \cdots P_f \widetilde{W} \cup \{w_{1}',\ldots,w_\ell' \}.
\end{equation}
 
 Again, by responsiveness of $R_f$, 
\begin{equation}\label{chorizo 6 lemma 1}
\widetilde{W}\cup \{w_1',\ldots,w_\ell'\} P_f \widetilde{W}\cup \{w_1',\ldots,w'_\ell\ , w_{\ell+1}'\} P_f \cdots P_f \widetilde{W}\cup \{w'_1\ldots,w'_k\}.
\end{equation}
Notice that $\widetilde{W}\cup \{w'_1,\ldots,w'_k\}= \mu'(f).$ Thus, by \eqref{chorizo 5 lemma 1} and \eqref{chorizo 6 lemma 1}, $\mu(f) P_f \mu'(f)$. This contradicts that  $\mu'$ dominates $\mu$ via $C$ and, therefore, \eqref{chorizo 4 lemma 1} holds and the claim is proven. 

Notice that since $\mu'$ is a matching, $q_f \geq |\mu'(f)|>|\mu(f)|$. Since $f\in C$ and $\mu'(C)=C$, $w'_1\in C$. As $\mu'$ dominates $\mu$ via $C$, \eqref{chorizo 3 lemma 1} holds, and together with  \eqref{chorizo 4 lemma 1} imply that  $(f,w'_1)$ is a blocking pair for $\mu$. 
\end{itemize}
 
Given that in each case we find a blocking pair for $\mu$, we  contradict the fact   that $\mu \in\mathcal{S}.$ Therefore, $\mu \in\mathcal{C}.$ 
\end{proof}

Since the stable matching set is always non-empty, the following holds:

\begin{remark}
$\mathcal{C}\neq\emptyset.$
\end{remark}

Next, using Example \ref{ejemplo},  we show that the inclusion stated in Theorem \ref{stable in core} can be a proper inclusion.

\noindent \textbf{Example 1 (continued)}
\emph{Consider matching
$
\mu_4=\begin{pmatrix}
f_1 & f_2 &f_3 \\
w_2w_3& w_1 &w_4 \\
\end{pmatrix}
$.
The pair $ (f_1,w_1) $ blocks  $\mu_4$ and, therefore,  $\mu_4 \notin \mathcal{S}  $. However, we will show that $ \mu_4 \in \mathcal{C} $. Assume that this is not the case. Then, there are  a coalition of agents $ C $ and a matching $ \mu' $ such that $ \mu'(C)=C $ and $ \mu' $ dominates $ \mu_4 $ via $ C $. Since $ w_2 $ and $ w_3 $ are matched to their most preferred firm in $\mu_4$, then $ w_2 $ and $ w_3 $  cannot be part of coalition $ C $.  Assume that $ w_4 \in C $. Since $\mu_4( w_4)=f_3$  and  $f_2$ is the only firm preferred by $w_4$ to $f_3$,  $ f_2 $ must belong to coalition $C$ and $\mu'(w_4)=f_2.$  However, $  \mu_4( f_2)=w_1P_{f_2}w_4=\mu'(f_2)$ contradincting that $ \mu' $ dominates $ \mu_4 $ via $ C $.
Lastly, assume that $ w_1\in C $. Thus, either $f_1$ or $f_3$ belong to $C$. If $f_1\in C$ and $ \mu'(w_1)=f_1$,  in order that $ \mu' $ dominates $ \mu_4 $ via $ C $ we need that $|\mu'(f_1)|=2$. Then, some other worker besides $w_1$ must belong to $C$, contradicting the previous arguments. If $f_3\in C$ and $ \mu'(w_1)=f_3$, then $\mu_4(f_3)=w_4 P_{f_3}w_1=\mu'(f_3)$. This contradicts that $ \mu' $ dominates $ \mu_4 $ via $ C $. Therefore,  $\mu_4$ is undominated implying that $\mu_4 \in \mathcal{C} \setminus \mathcal{S}. $}
\hfill $\Diamond$

\vspace{15 pt}

In a many-to-one setting with strict and responsive  preferences, the strong core coincides with the set of stable matchings   \citep[][]{roth1985common}. However, if we allow indifferences, the following result shows that  the strong core coincides with the   set of strongly stable matchings.

\begin{theorem}\label{strong core = strongly stable}
$\mathcal{C}_{S}=\mathcal{SS}.$
\end{theorem}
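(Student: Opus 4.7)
The plan is to prove the two inclusions $\mathcal{C}_S \subseteq \mathcal{SS}$ and $\mathcal{SS} \subseteq \mathcal{C}_S$ separately by contrapositive, in parallel with the structure of the proof of Theorem \ref{stable in core}.

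For $\mathcal{C}_S \subseteq \mathcal{SS}$, I start from $\mu \notin \mathcal{SS}$ and exhibit a weak domination. If $\mu$ is not individually rational, some agent $c$ satisfies $\emptyset P_c \mu(c)$, and the singleton coalition $C = \{c\}$ together with the matching $\mu'$ setting $\mu'(c) = \emptyset$ weakly dominates $\mu$. If $\mu$ has a weakly blocking pair $(f,w)$ of type (ii.a), I take $C = \{f,w\} \cup (\mu(f)\setminus\{w'\})$ and let $\mu'$ replace $w'$ by $w$ in $f$'s assignment; responsiveness then gives $\mu'(f) R_f \mu(f)$ and $\mu'(w) = f R_w \mu(w)$, with at least one strict by the definition of weakly blocking pair, while the remaining workers of $\mu(f)\setminus\{w'\}$ stay matched to $f$ and are therefore indifferent. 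For type (ii.b), I take $C = \{f,w\}\cup \mu(f)$ and let $\mu'$ add $w$ to $\mu(f)$; since $|\mu(f)|<q_f$ and $wP_f\emptyset$, responsiveness yields $\mu'(f) P_f \mu(f)$ strictly, while $\mu'(w) = f R_w \mu(w)$.

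For $\mathcal{SS} \subseteq \mathcal{C}_S$, suppose toward a contradiction that $\mu \in \mathcal{SS}$ is weakly dominated by $\mu'$ via $C$; I will produce a weakly blocking pair. Fix $c^* \in C$ with $\mu'(c^*) P_{c^*}\mu(c^*)$. If $c^* = f^* \in F$, the exact responsiveness argument of Theorem \ref{stable in core} (the split on whether $|\mu'(f^*)|\leq|\mu(f^*)|$ or not, carried out via \eqref{chorizo 1 lemma 1}--\eqref{chorizo 2 lemma 1} and \eqref{chorizo 5 lemma 1}--\eqref{chorizo 6 lemma 1}) produces a blocking pair $(f^*, w'_1)$, which is a fortiori a weakly blocking pair, contradicting strong stability. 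If $c^* = w^* \in W$, then $f^* := \mu'(w^*) \in C$ (since $\mu'(w^*) \neq \emptyset$ by individual rationality of $\mu$ and the strict preference of $w^*$); if in addition $\mu'(f^*) P_{f^*}\mu(f^*)$, I reduce to the previous case by taking $f^*$ as the new $c^*$.

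The critical case is therefore $c^* = w^*$ with $\mu'(f^*) I_{f^*}\mu(f^*)$, and here I claim that $(f^*, w^*)$ itself is a weakly blocking pair. First, $w^* \notin \mu(f^*)$ (otherwise $\mu(w^*) = f^* = \mu'(w^*)$ contradicts the strict preference of $w^*$), and $f^* P_{w^*}\mu(w^*)$ is given. When $|\mu(f^*)| = q_{f^*}$, I show there exists $w'' \in \mu(f^*)$ with $w^* R_{f^*} w''$: otherwise every $w'' \in \mu(f^*)$ would strictly beat $w^*$, and a swap-based responsiveness chain analogous to \eqref{chorizo 1 lemma 1}--\eqref{chorizo 2 lemma 1} would yield $\mu(f^*) P_{f^*}\mu'(f^*)$, contradicting the indifference. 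When $|\mu(f^*)| < q_{f^*}$, I show $w^* P_{f^*}\emptyset$: otherwise $\emptyset P_{f^*} w^*$ by the no-indifference-to-the-empty-set assumption, and responsiveness gives $\mu'(f^*)\setminus\{w^*\} P_{f^*}\mu'(f^*) I_{f^*}\mu(f^*)$, which combined with a responsiveness comparison of $\mu(f^*)$ against $\mu'(f^*)\setminus\{w^*\}$ using the acceptability of each member of $\mu(f^*)$ yields a contradiction. The main obstacle is precisely this final sub-case: unlike in Theorem \ref{stable in core}, the firm $f^*$ exhibits only indifference, so the responsiveness comparison must extract a strict conclusion from an equality, relying sharply on both individual rationality of $\mu$ and the no-indifference-to-the-empty-set assumption.
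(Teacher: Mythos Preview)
Your forward inclusion $\mathcal{C}_S \subseteq \mathcal{SS}$ is essentially the paper's argument, with one small slip: when $\mu$ fails individual rationality on the firm side, the violation is that some $w\in\mu(f)$ has $\emptyset P_f w$, which does \emph{not} imply $\emptyset P_f \mu(f)$; hence the singleton coalition $\{f\}$ with $\mu'(f)=\emptyset$ need not weakly dominate. The paper (and the obvious fix) uses $C=\{f\}\cup(\mu(f)\setminus\{w\})$ and $\mu'(f)=\mu(f)\setminus\{w\}$.

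The substantive gap is in the reverse inclusion, precisely in your ``critical case'' $c^*=w^*$ with $f^*=\mu'(w^*)$ and $\mu'(f^*) I_{f^*}\mu(f^*)$. Your claim that $(f^*,w^*)$ is itself a weakly blocking pair is false in general, and the appeal to a ``swap-based responsiveness chain analogous to \eqref{chorizo 1 lemma 1}--\eqref{chorizo 2 lemma 1}'' does not work. Those chains in Theorem~\ref{stable in core} succeed because the \emph{worst} element of $\mu(f)\setminus\mu'(f)$ weakly beats the \emph{best} element of $\mu'(f)\setminus\mu(f)$; you only know that every member of $\mu(f^*)$ beats the single worker $w^*$, while other members of $\mu'(f^*)$ may outrank members of $\mu(f^*)$. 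Concretely, take $q_{f^*}=2$, $\mu(f^*)=\{a,b\}$, $\mu'(f^*)=\{c,w^*\}$ with $c\,P_{f^*}\,a\,P_{f^*}\,b\,P_{f^*}\,w^*\,P_{f^*}\,\emptyset$. Responsiveness does not determine $\{c,w^*\}$ versus $\{a,b\}$, so $\mu'(f^*) I_{f^*}\mu(f^*)$ is consistent; every $w''\in\mu(f^*)$ strictly beats $w^*$, yet $\mu(f^*) P_{f^*}\mu'(f^*)$ fails and $(f^*,w^*)$ is not a weakly blocking pair. The correct pair here is $(f^*,c)$. The paper handles exactly this by introducing the sets $\mathcal{X}=\{w\in\mu'(f)\setminus\mu(f): w R_f w_\ell\}$ and $\mathcal{Y}=\{w\in\mu'(f)\setminus\mu(f): f P_w \mu(w)\}$ and showing, via a finer responsiveness chain, that either $\mathcal{X}\cap\mathcal{Y}\neq\emptyset$ or some $\overline{w}\in\mathcal{X}$ satisfies $\overline{w} P_f w_\ell$ strictly; in both cases one obtains a weakly blocking pair, but with a worker that need not be $w^*$. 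Your $|\mu(f^*)|<q_{f^*}$ sub-case has the same flaw: from $\mu'(f^*)\setminus\{w^*\} P_{f^*}\mu(f^*)$ no contradiction with ``acceptability of each member of $\mu(f^*)$'' follows; what that inequality actually buys is a reduction to the firm-strict case (a new weak domination via $C\setminus\{w^*\}$), which you would then need to run through.
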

\begin{proof}
First, we prove that $\mathcal{C}_{S} \subseteq \mathcal{SS}.$ Assume that $\mu \notin \mathcal{SS}$. We want to see that $\mu \notin
\mathcal{C}_{S}.$  There are two cases to consider:  

\begin{itemize}

\item[$\boldsymbol{1}$.] \textbf{$\boldsymbol{\mu}$ is not individually rational.} Thus, there are  $f\in F$ and $w\in \mu(f)$ such that either $\emptyset P_{w}f $ or $\emptyset P_f w .$
If $\emptyset P_{w}f$, let $\mu ^{\prime }$ be a matching such that $\mu
^{\prime }(w)=\emptyset $. Thus, $\mu'$ weakly dominates  $\mu $ via  coalition $\left\{ w\right\} .$ Therefore $\mu \notin
\mathcal{C}_{S}.$ 
If there is $w\in \mu(f)$ such that  $\emptyset P_f w ,$ let $\mu'$ be a matching such that $\mu'(f)=\mu(f) \setminus \{w\}$. Thus, $\mu'$ weakly dominates $\mu$ via  coalition $C=\{f\} \cup \left(\mu(f)\setminus \{ w\}\right).$ Therefore $\mu \notin
\mathcal{C}_{S}.$

\item[$\boldsymbol{2}$.] \textbf{$\boldsymbol{\mu}$ is  individually rational.} Then, since $\mu \notin \mathcal{SS},$ there is a weakly blocking pair  $(f,w')$ of $\mu.$ There are two subcases to consider:

\begin{enumerate}

\item[$\boldsymbol{2.1}$.] \textbf{$\boldsymbol{|\mu(f)|=q_f}.$} 
Since $(f,w')$ is a weakly blocking pair for $\mu$,  there is $\overline{w}$ $\in \mu (f)$ such that 
either  $w'P_{f}\overline{w}\text{ and }fR_{w'}\mu (w')$  or $w'R_{f}\overline{w}\text{ and } fP_{w'}\mu (w').$
  Consider  coalition  $C=\{ f,w'\} \cup \left(\mu
(f)\setminus \{\overline{w}\} \right)$.  Let matching $\mu'$ be
such that $\mu'(w')=f$ and $\mu'(w)=f$ for each $w\in \mu(f) \setminus \{\overline{w}\}.$ This implies that  $\mu'(f)=\left( \mu (f)\setminus
\{\overline{w}\}\right) \cup \{w'\}$. By definition of $\mu'$, $\mu'(C)=C.$  We claim that $\mu ^{\prime }$ dominates $%
\mu $ via $C.$
Since $w'P_f \overline{w}$, by responsiveness, 
\begin{equation}\label{ecu 01 teor 1}
\mu'(f)=\left( \mu (f)\setminus\{\overline{w}\}\right) \cup \{w'\}P_f \mu(f).
\end{equation}
 By definition of $\mu'$, $\mu'(w)=\mu(w)$ and, therefore,  $\mu'(w)I_w \mu(w)$ for each $w\in C\setminus \{w'\}$. Finally, since $(f,w')$ is a weakly blocking pair for $\mu$, 
 \begin{equation}\label{ecu 02 teor 1}
 f=\mu'(w')P_{w'}\mu(w').
  \end{equation} Hence, $\mu'$ weakly dominates $\mu$ via $C$.

\item[$\boldsymbol{2.2}$.] \textbf{$\boldsymbol{|\mu(f)|<q_f}.$}  Consider coalition $C=\{ f,w'\} \cup \mu (f) $. Let matching $\mu'$ be
such that $\mu'(w')=f$ and $\mu'(w)=f$ for each $w\in \mu(f).$ This implies that  $\mu'(f)=\mu (f) \cup \{w'\}$. By definition of $\mu'$, $\mu'(C)=C.$   We claim that $\mu'$ weakly dominates $\mu $ via $C.$ First consider $w\in \mu (f)\setminus \{w'\}$. By definition of $\mu',$ $\mu '(w)=\mu (w)$
 and then $\mu'(w )R_{w }\mu (w)$. Since $(f,w')$ is a
weakly blocking pair for $\mu$, we have that $fR_{w'}\mu (w')$ and $w'P_{f}\emptyset $.  
Definition of $\mu'$, $w'P_{f}\emptyset$, and responsiveness imply  $\mu'(f)P_{f}\mu (f)$. Furthermore, $\mu'(w')=f$ and $fR_{w'}\mu (w')$ imply $\mu'(w')R_{w'}\mu (w').$ Thus, $\mu'$ weakly dominates $\mu $ via $C$. Therefore $\mu \notin \mathcal{C}_{S}.$ 
\end{enumerate}

\end{itemize}

\noindent Next, we prove that $\mathcal{SS}\subseteq \mathcal{C}_{S}.$ Assume that $\mu \in
\mathcal{SS} $ and $\mu \notin\mathcal{C}_S.$ Thus, there are a non-empty coalition $C$ and a matching $\mu'$ such that $\mu'$ weakly dominates $\mu $ via  $C.$  There are two cases to consider:
\begin{enumerate}

\item[$\boldsymbol{1}.$] \textbf{There is  $\boldsymbol{f\in C$ such that $\mu'(f)P_f \mu(f)}$}.
Now, following similar reasonings to the ones in the proof of Theorem \ref{stable in core}, and replacing \eqref{chorizo 3 lemma 1}  by
\[
f=\mu'(w'_1) R_{w'_1} \mu(w'_1), \tag{\ref{chorizo 3 lemma 1}'}
\]
we can prove that $(f,w'_1)$ is a weakly blocking pair for $\mu$.

\item[$\boldsymbol{2}.$] \textbf{There is $\boldsymbol{\widehat{w}\in C$ such that $\mu'(\widehat{w})P_{\widehat{w}}\mu(\widehat{w}) }$.} Let $f=\mu'(\widehat{w})$. Thus, $f\in C$. Since $\mu'$ weakly dominates $\mu$ via $C$, $\mu'(f)R_{f}\mu(f) .$ Assume w.l.o.g. that $wP_f \emptyset$ for each $w\in \mu'(f)$.\footnote{Otherwise, if $w'\in \mu'(f)$ is such that $\emptyset P_f w'$ we have, by responsiveness,  $\mu'(f)\setminus \{w'\}P_f \mu'(f)$. Let $\mu''(f)=\mu'(f)\setminus \{w'\}$ and $\mu''(a)=\mu'(a)$ for each $a\in C \setminus \{f,w'\}.$ Then, $\mu''$ dominates $\mu$ via $C\setminus\{w'\}.$} There are two subcases to consider:

\begin{itemize}
\item[$\boldsymbol{2.1}.$]$\boldsymbol{\mu'(f)P_{f}\mu(f) }$. Following the arguments of Case 1, the pair $(f,\widehat{w})$ is a weakly blocking pair for $\mu.$

\item[$\boldsymbol{2.2}.$]$\boldsymbol{\mu'(f)I_{f}\mu(f) }$.
First consider the case $|\mu'(f)|>|\mu(f)|$. Since $\mu'$ is a matching, $q_f \geq |\mu'(f)|>|\mu(f)|$. Also $\widehat{w}\in C$ such that $f=\mu'(\widehat{w})P_{\widehat{w}}\mu(\widehat{w})$ together with the fact that $\widehat{w}P_f \emptyset$ imply that $(f,\widehat{w})$ is a weakly blocking pair for $\mu.$

Now consider the case $|\mu'(f)|\leq|\mu(f)|$. Then, $|\mu'(f)\setminus \mu(f)|\leq|\mu(f)\setminus \mu'(f)|.$ By hypothesis, $\widehat{w}\in \mu'(f)\setminus \mu(f).$

Let $$\mathcal{X}=\{w\in \mu'(f)\setminus \mu(f):wR_f w_\ell\},$$ and
 $$\mathcal{Y}=\{w\in \mu'(f)\setminus \mu(f):f=\mu'(w) P_w \mu(w)\}.$$ 
 
Note that $\mathcal{Y}\neq \emptyset$, since $\widehat{w}\in \mathcal{Y}$.

First, consider  $\mathcal{X}\cap \mathcal{Y} \neq \emptyset$. Let $\overline{w} \in \mathcal{X}\cap \mathcal{Y}.$ Then, the pair  $(f,\overline{w})$ is a weakly blocking pair for $\mu.$

Second, consider $\mathcal{X}\cap \mathcal{Y}= \emptyset$. Then, $\mathcal{Y} \subseteq  \mathcal{X}^c$ and since $\mathcal{Y}\neq \emptyset, ~\mathcal{X}^c\neq \emptyset.$\footnote{By $\mathcal{X}^c$ we denote the complement of $\mathcal{X}.$}
We claim that 
\begin{equation}\label{ecu1 sub caso 2.2.2}
\text{there is  $\overline{w}\in \mathcal{X}$ such that $\overline{w}P_f w_\ell$.}
\end{equation}  Otherwise, $w_\ell R_f w$ for each $w\in \mathcal{X}$.  Moreover, by definition of $\mathcal{X}$, $w R_f w_\ell $ for each $w\in \mathcal{X}$. Thus, $w I_f w_\ell $ for each $w\in \mathcal{X}.$ Let $\mathcal{X}=\{w'_1,\ldots,w'_m\}.$ Notice that $m< k$. Hence, since $w I_f w_\ell $ for each $w\in \mathcal{X}$,  
\begin{equation}\label{ecu2 sub caso 2.2.2}
\mu(f)R_f \mu(f)\setminus \{w_1\} \cup \{w'_1\} R_f \cdots R_f  
\mu(f)\setminus \{w_1,\ldots,w_m\} \cup \{w'_1,\ldots,w'_m\}.
\end{equation}

Now, since $w_\ell P_f w $ for each $w\in \mathcal{X}^c$,  we have
\begin{equation}\label{ecu3 sub caso 2.2.2}\begin{split}
\mu(f)\setminus \{w_1,\ldots,w_m\} \cup \{w'_1,\ldots,w'_m\} P_f &\\ 
\mu(f)\setminus\{w_1,\ldots,w_m, w_{m+1}\} \cup \{&w'_1,\ldots,w'_m,w'_{m+1}\}P_f \cdots P_f \\
&\mu(f)\setminus\{w_1,\ldots,w_k\} \cup \{w'_1,\ldots,w'_k\}.
\end{split}
\end{equation}
Lastly, by the individual rationality of $\mu$, $wP_f \emptyset$ for each $w\in \mu$ 
\begin{equation}\label{ecu4 sub caso 2.2.2}
\begin{split}
\mu(f)\setminus \{w_1,\ldots,w_k\} \cup \{w'_1,\ldots,w'_k\}P_f &\\ 
\mu(f)\setminus \{w_1,\ldots,w_k,w_{k+1}\} \cup& \{w'_1,\ldots,w'_k\}P_f \cdots P_f \\
 \mu(f)\setminus \{w_1,&\ldots,w_\ell\} \cup \{w'_1,\ldots,w'_k\}=\mu'(f).
\end{split}
\end{equation}
By \eqref{ecu2 sub caso 2.2.2}, \eqref{ecu3 sub caso 2.2.2} and \eqref{ecu4 sub caso 2.2.2}, $\mu(f)P_f \mu'(f)$. This contradicts this subcase's hypothesis. Then, \eqref{ecu1 sub caso 2.2.2} holds. Since  $f\in C$ and  $\overline{w}\in \mu'(f)$, $\overline{w}\in C$. Then, $\mu'(\overline{w})R_{\overline{w}} \mu(\overline{w})$. This fact together with  \eqref{ecu1 sub caso 2.2.2} imply that $(f,\overline{w})$ is a weakly blocking pair for $\mu.$

\end{itemize}

\end{enumerate} 
Given that in each case we find a weakly blocking pair for $\mu$, we contradict the fact  that $\mu \in\mathcal{SS}.$ Therefore, $\mu \in\mathcal{C}_S.$ 
\end{proof}

Note, however, that the strong core may be empty. To see this, consider the following example due to \cite{roth1992two} (p. 167). There are two workers and one firm, with quota equal to one, that is acceptable to but indifferent between both workers. Then both individually rational matchings are weakly dominated (by one another) even though both are stable. This example, together with Theorem \ref{strong core = strongly stable} confirms the fact that, as we previously mentioned, the set $\mathcal{SS}$ may be empty. 

The previous theorem shows that nothing is lost by ignoring coalitions other than singletons and pairs when studying weak domination:  the strong core coincides with the set of strongly stable matchings. The following result shows that the same is true for super weak domination:

\begin{theorem} \label{super core = super stable}
$\mathcal{C}_{SS}=\mathcal{SSS}.$
\end{theorem}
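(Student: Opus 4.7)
The plan is to mirror the proof of Theorem \ref{strong core = strongly stable} and establish both inclusions, adapting every argument from weak to super weak domination. For $\mathcal{C}_{SS}\subseteq \mathcal{SSS}$, I proceed by contrapositive. Failure of individual rationality is handled exactly as in Case 1 of the proof of Theorem \ref{strong core = strongly stable}, since any weak domination is in particular a super weak domination. If $\mu$ is individually rational and $(f,w^\ast)$ is a super weakly blocking pair, I use the same matching $\mu'$ and coalition $C$ constructed in Subcases 2.1 and 2.2 of that proof: when $|\mu(f)|=q_f$, swap a worker $\overline{w}\in\mu(f)$ satisfying $w^\ast R_f \overline{w}$ with $w^\ast$; when $|\mu(f)|<q_f$, add $w^\ast$ to $\mu(f)$. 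Responsiveness yields $\mu'(f)R_f\mu(f)$ in the first case (from $w^\ast R_f \overline{w}$) and $\mu'(f)P_f\mu(f)$ in the second (from $w^\ast P_f\emptyset$); the strict change $\mu'(w^\ast)=f\neq \mu(w^\ast)$ together with $fR_{w^\ast}\mu(w^\ast)$ provides the distinct-assignment witness required by super weak domination.

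For $\mathcal{SSS}\subseteq \mathcal{C}_{SS}$, assume toward a contradiction that $\mu\in \mathcal{SSS}$ is super weakly dominated by some $\mu'$ via $C$. The first step is to locate a firm $f\in C$ with $\mu'(f)\neq \mu(f)$. Pick any $c_0\in C$ with $\mu'(c_0)\neq \mu(c_0)$. If $c_0\in F$ we are done; otherwise $c_0=\widehat{w}$ is a worker with $\mu'(\widehat{w})R_{\widehat{w}}\mu(\widehat{w})$ and $\mu'(\widehat{w})\neq \mu(\widehat{w})$, and the no-indifference-to-empty-set assumption combined with the individual rationality of $\mu$ rules out $\mu'(\widehat{w})=\emptyset$, so $\mu'(\widehat{w})=f\in F$. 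Since $\mu'(C)=C$, $f\in C$, and $\widehat{w}\in\mu'(f)\setminus\mu(f)$ gives $\mu'(f)\neq \mu(f)$. With such $f$ in hand, order $\mu'(f)\setminus\mu(f)=\{w'_1,\ldots,w'_k\}$ and $\mu(f)\setminus\mu'(f)=\{w_1,\ldots,w_\ell\}$ by decreasing $f$-preference; note $k\geq 1$ (since $\mu'(f)\subsetneq \mu(f)$ would force $\mu(f)P_f\mu'(f)$ by responsiveness and individual rationality, contradicting $\mu'(f)R_f\mu(f)$). I then replicate the chain-of-swaps analysis from the two subcases of Theorem \ref{stable in core}, with the crucial modification that the derived inequality $\mu(f)P_f\mu'(f)$ now contradicts the weaker hypothesis $\mu'(f)R_f\mu(f)$. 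This yields $w'_1R_f w_\ell$ when $|\mu'(f)|\leq |\mu(f)|$ and $w'_1R_f\emptyset$ when $|\mu'(f)|>|\mu(f)|$, which the no-indifference-to-empty-set assumption upgrades to $w'_1P_f\emptyset$ in the latter. Together with $\mu'(w'_1)=fR_{w'_1}\mu(w'_1)$ and the appropriate quota dichotomy, the pair $(f,w'_1)$ satisfies condition (iii.a) or (iii.b) of the super weak blocking definition, contradicting $\mu\in \mathcal{SSS}$.

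The main obstacle is the first step of the second inclusion: unlike weak domination, super weak domination allows the unique strictly-changed agent to be a worker, so the existence of a firm $f\in C$ with $\mu'(f)\neq \mu(f)$ is not immediate and requires careful use of both $\mu'(C)=C$ and the no-indifference-to-empty-set assumption. Once this is in place, the chain-of-swaps argument goes through with the routine replacement of strict by weak preferences in the final comparison.
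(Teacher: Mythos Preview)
Your proposal is correct and follows essentially the same approach as the paper: both directions mirror the proofs of Theorems \ref{stable in core} and \ref{strong core = strongly stable}, replacing strict by weak comparisons where appropriate and invoking responsiveness via the same chain-of-swaps argument. In fact, your treatment of the second inclusion is slightly more careful than the paper's: the paper writes ``take any $f\in C$'' and then uses ``$\mu'\neq\mu$'' to deduce $\mu'(f)\subsetneq\mu(f)$, which does not literally follow for an arbitrary $f$; your explicit step of locating a firm $f\in C$ with $\mu'(f)\neq\mu(f)$ (by tracing the witness $c_0$ through $\mu'(C)=C$ and using individual rationality plus no-indifference-to-empty-set) cleanly fills this small gap.
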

\begin{proof}
First, we prove that $\mathcal{C}_{SS} \subseteq \mathcal{SSS}.$ Assume that $\mu \notin \mathcal{SSS}$. We want to see that $\mu \notin
\mathcal{C}_{SS}.$  There are two cases to consider:  

\begin{itemize}

\item[$\boldsymbol{1}$.] \textbf{$\boldsymbol{\mu}$ is not individually rational.}  To prove this case, we can construct  matchings $\mu'$ that super weakly dominates $\mu$ in an analogous way to Case 1 in the proof of Theorem \ref{strong core = strongly stable}.

\item[$\boldsymbol{2}$.] \textbf{$\boldsymbol{\mu}$ is  individually rational.} Then, since $\mu \notin \mathcal{SSS},$ there is a super weakly blocking pair  $(f,w')$ of $\mu.$ There are two subcases to consider:

\begin{enumerate}

\item[$\boldsymbol{2.1}$.] \textbf{$\boldsymbol{|\mu(f)|=q_f}.$} 
Since $(f,w')$ is a super weakly blocking pair for $\mu$,  there is $\overline{w}$ $\in \mu (f)$ such that 
$  w'R_{f}\overline{w}\text{ and }fR_{w'}\mu (w'). $
Now, following similar reasonings to the ones in Case 2.1 in the proof of Theorem \ref{strong core = strongly stable}, and replacing \eqref{ecu 01 teor 1} and 
\eqref{ecu 02 teor 1} by
\[\mu'(f)=\left( \mu (f)\setminus\{\overline{w}\}\right) \cup \{w'\}R_f \mu(f), \tag{\ref{ecu 01 teor 1}'}
\]
and 
 \[
 f=\mu'(w')R_{w'}\mu(w'), \tag{\ref{ecu 02 teor 1}'}
 \]
we can construct coalition $C$ and matching $\mu'$ such that $\mu'$ super weakly dominates $\mu$ via $C$.

\item[$\boldsymbol{2.2}$.] \textbf{$\boldsymbol{|\mu(f)|<q_f}.$} To prove this case, we can construct  matchings $\mu'$ that super weakly dominates $\mu$ in an analogous way to Case 2.2. in the proof of Theorem \ref{strong core = strongly stable}.
\end{enumerate}
By subcases 2.1 and 2.2,  $\mu \notin \mathcal{C}_{SS}.$ 

\end{itemize}

 Next, we prove that $\mathcal{SSS}\subseteq \mathcal{C}_{SS}.$ Assume that $\mu \in
\mathcal{SSS} $ and $\mu \notin\mathcal{C}_{SS}.$ Thus, there are a non-empty coalition $C$ and a matching $\mu'$ such that $\mu'$ super weakly dominates $\mu $ via  $C.$ First, we claim that $C\cap F \neq \emptyset.$ Otherwise,  $C\cap F = \emptyset$ implies, for each $w \in C,$ that $\mu'(w)=\emptyset$ (since $\mu'(C)=C$), and therefore $\emptyset=\mu'(w) R_w \mu(w),$ contradicting the individual rationality of $\mu.$ So take any $f\in C$. Since $\mu'$ super weakly dominated $\mu$ via $C$, $\mu'(f)R_f \mu(f).$ Now we claim that $\mu'(f)\setminus \mu(f)\neq \emptyset.$  Otherwise, $\mu'(f)\setminus \mu(f)=\emptyset$, $\mu'(f)R_f \mu(f)$, and $\mu'\neq \mu$ imply $\mu'(f)\subset\mu(f).$ Let $\mu(f)\setminus \mu'(f):= \{w_1,\ldots,w_\ell\}$. By individual rationality of $\mu$, $w_iP_f\emptyset$ for each  $i=1,\ldots,\ell.$ By responsiveness of $R_f$, 
$$
\mu(f)=\mu'(f) \cup\{w_1,\ldots,w_\ell\} P_f \mu'(f) \cup\{w_1,\ldots,w_{\ell-1}\}P_f \cdots P_f \mu'(f) \cup\{w_1\} P_f \mu'(f).
$$
Thus $\mu(f) P_f \mu'(f),$ contradicting $\mu'(f)R_f \mu(f)$. Therefore, $\mu'(f)\setminus \mu(f)\neq \emptyset.$ Define 
$$
\mu'(f)\setminus \mu(f):= \{w'_1,\ldots,w'_k\}
$$
where subscripts are chosen so that $w'_i R_f w'_{i+1}$ for each $i=1\ldots,k-1,$ and  
$$
\mu(f)\setminus \mu'(f):= \{w_1,\ldots,w_\ell\}
$$
where subscripts are chosen so that $w_i R_f w_{i+1}$ for each $i=1\ldots,\ell-1.$
Now, there are two cases to consider:
\begin{itemize}
\item[$\boldsymbol{1}.$]$\boldsymbol{|\mu'(f)|\leq|\mu(f)|}$. Then, $|\mu'(f)\setminus \mu(f)|\leq|\mu(f)\setminus \mu'(f)|.$ First, we claim that  
\begin{equation}\label{chorizo 0 theo 2}
w'_1 R_f w_\ell.
\end{equation} Otherwise, $$
w_1 R_f  \cdots R_fw_\ell P_f w'_1 R_f  \cdots R_f w'_k$$ 
and, by responsiveness of $R_f$ and the individual rationality of $\mu$, since $\ell \geq k$ 
\begin{equation}\label{chorizo 1 theo 2}
\mu(f) P_f \mu(f) \setminus \{w_\ell \} P_f \cdots P_f \mu(f) \setminus \{w_{k+1},\ldots,w_\ell \}.
\end{equation}
Define $\widetilde{W}:=\mu'(f)\cap \mu(f).$ Notice that $\mu(f) \setminus \{w_{k+1},\ldots,w_\ell \} = \widetilde{W}\cup \{w_1,\ldots,w_k\}.$ Again, by responsiveness of $R_f$, 
\begin{equation}\label{chorizo 2 theo 2}
\widetilde{W}\cup \{w_1,\ldots,w_k\} P_f \widetilde{W}\cup \{w_1,\ldots,w'_k\} P_f \cdots P_f \widetilde{W}\cup \{w'_1\ldots,w'_k\}.
\end{equation}
Notice that $\widetilde{W}\cup \{w'_1,\ldots,w'_k\}= \mu'(f).$ Thus, by \eqref{chorizo 1 theo 2} and \eqref{chorizo 2 theo 2}, $\mu(f)P_f \mu'(f)$. This contradicts that  $\mu'(f) R_f \mu(f)$ and, therefore, \eqref{chorizo 0 theo 2} holds and the claim is proven. Since $f\in C$ and $\mu'(C)=C$, $w'_1\in C$. As $\mu'$ super weakly  dominates $\mu$ via $C$, 
\begin{equation}\label{chorizo 3 theo 2}
f=\mu'(w'_1) R_{w'_1} \mu(w'_1).
\end{equation} 
If $|\mu(f)|=q_f$, by \eqref{chorizo 0 theo 2} and \eqref{chorizo 3 theo 2},   $(f,w'_1)$ is a super weakly blocking pair for $\mu$.  If $|\mu(f)|<q_f$, by \eqref{chorizo 0 theo 2}  we have $w'_1 R_f w_\ell$, and by the individual rationality of $\mu$, $w_\ell P_f \emptyset.$ Hence, $w'_1 P_f \emptyset$ and, therefore,   together with \eqref{chorizo 3 theo 2},   $(f,w'_1)$ is a super weakly blocking pair for $\mu$.

\item[$\boldsymbol{2}.$] $\boldsymbol{|\mu'(f)|>|\mu(f)|}$. Following a similar reasoning as in Case 2 of the proof of Theorem \ref{stable in core}, we can prove that  
\begin{equation}\label{chorizo 4 theo 2}
w'_1 P_f \emptyset.
\end{equation}

Notice that since $\mu'$ is a matching, $q_f \geq |\mu'(f)|>|\mu(f)|$. Since $f\in C$ and $\mu'(C)=C$, $w'_1\in C$. As $\mu'$ super weakly dominates $\mu$ via $C$, $f=\mu'(w'_1) R_{w'_1} \mu(w'_1)$ and, together with \eqref{chorizo 4 theo 2},  $(f,w'_1)$ is a super weakly blocking pair for $\mu$. 
\end{itemize}
 
Given that in each case we find a super weakly blocking pair for $\mu$, we contradict the fact  that $\mu \in\mathcal{SSS}.$ Therefore, $\mu \in\mathcal{C}_{SS}.$ 
\end{proof}

In order to compute solution concepts in matching models with indifferences, it is usual to study their relationship with solution concepts in matching models with strict preferences,  for which several algorithms are already available in the literature. Given a market $R,$ denote the set of all strict tie-breakings of $R$ by $\mathcal{L}(R).$   The following result provides a way to compute the super core of a market with indifferences in terms of  the stable matchings of all its associated strict markets.  

\begin{proposition}
For any  market $R,$ $\mathcal{C}_{SS}(R)=\bigcap_{P \in \mathcal{L}(R)}\mathcal{S}(P).$
\end{proposition}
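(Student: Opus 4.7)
The plan is to invoke Theorem \ref{super core = super stable} to replace $\mathcal{C}_{SS}(R)$ by $\mathcal{SSS}(R)$ and then prove the set equality $\mathcal{SSS}(R)=\bigcap_{P\in \mathcal{L}(R)}\mathcal{S}(P)$ by two inclusions. Throughout, I will exploit the fact that, because of the \emph{no indifference to the empty set} assumption, a matching is individually rational under $R$ if and only if it is individually rational under every $P\in \mathcal{L}(R)$ (ties between an agent and $\emptyset$ never arise), so I only need to worry about blocking pairs.

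For the inclusion $\mathcal{SSS}(R)\subseteq \bigcap_{P}\mathcal{S}(P)$, I would argue by contraposition. Fix $\mu\in\mathcal{SSS}(R)$ and any $P\in\mathcal{L}(R)$, and suppose $\mu\notin\mathcal{S}(P)$. Then some pair $(f,w)$ blocks $\mu$ under $P$, which by the two clauses of the blocking definition gives either $|\mu(f)|=q_f$ with $wP_fw'$ for some $w'\in\mu(f)$ and $fP_w\mu(w)$, or $|\mu(f)|<q_f$ with $wP_f\emptyset$ and $fP_w\mu(w)$. Since $P$ refines $R$, each strict relation under $P$ implies the corresponding weak relation under $R$, so $(f,w)$ satisfies clause (iii.a) or (iii.b) of the super weakly blocking pair definition. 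This contradicts $\mu\in\mathcal{SSS}(R)$.

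For the reverse inclusion $\bigcap_{P}\mathcal{S}(P)\subseteq \mathcal{SSS}(R)$, I again argue by contraposition. Assume $\mu\notin\mathcal{SSS}(R)$. If $\mu$ fails individual rationality under $R$ then the same failure occurs under every tie-breaking, so $\mu\notin \mathcal{S}(P)$ for all $P$. Otherwise, there is a super weakly blocking pair $(f,w)$ for $\mu$ under $R$. The key step is to build a particular tie-breaking $P\in\mathcal{L}(R)$ that promotes $(f,w)$ to a genuine blocking pair. To do this, choose $P_f\in\mathcal{L}(R_f)$ so that whenever $R_f$-ties among workers are broken, $w$ is placed above any $w'\in \mu(f)$ with $wI_fw'$; choose $P_w\in\mathcal{L}(R_w)$ so that $f$ is placed above $\mu(w)$ in case of an $R_w$-tie; and break the remaining ties of the remaining agents arbitrarily. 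This yields some $P\in\mathcal{L}(R)$, and by construction, $(f,w)$ meets clause (i.a) or (i.b) of the blocking pair definition under $P$, so $\mu\notin\mathcal{S}(P)$.

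The most delicate point is the construction of the tie-breaking in the second inclusion: one must verify that the local choices at agent $f$ and agent $w$ that turn the weak preferences of the super weakly blocking pair into strict ones are compatible with some linear extension of $R_f$ and $R_w$ respectively, and that this can be done independently for each agent, as $\mathcal{L}(R)$ consists of all profiles of individual linear extensions. Once this is settled, everything else amounts to rewriting the definitions, and the equality $\mathcal{C}_{SS}(R)=\bigcap_{P\in\mathcal{L}(R)}\mathcal{S}(P)$ follows from Theorem \ref{super core = super stable}.
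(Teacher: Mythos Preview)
Your proposal is correct and follows the same decomposition as the paper: first invoke Theorem~\ref{super core = super stable} to reduce to $\mathcal{SSS}(R)=\bigcap_{P\in\mathcal{L}(R)}\mathcal{S}(P)$, then establish that equality. The only difference is that the paper obtains this second equality by citing Proposition~2 of \cite{irving2000hospitals}, whereas you reprove it directly via the two inclusions and the tie-breaking construction; your argument is sound and is essentially the standard proof of that cited result.
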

\begin{proof}
Let $R$ be a many-to-one market. By Theorem \ref{super core = super stable}, $\mathcal{C}_{SS}(R)=\mathcal{SSS}(R).$ By Proposition 2 in \cite{irving2000hospitals}, $\mathcal{SSS}(R)=\bigcap_{P \in \mathcal{L}(R)}\mathcal{S}(P),$ 
and the result follows.  
\end{proof}

Our last result applies to the one-to-one model with indifferences, and provides a way to compute the core of a market with indifferences in terms of  the stable matchings of all its associated strict markets. 

\begin{proposition}\label{propo 1 a 1 core = union de estrictos}
For any one-to-one market $R$, $\mathcal{C}(R)=\bigcup_{P\in \mathcal{L}(R)}\mathcal{S}(P)$.
\end{proposition}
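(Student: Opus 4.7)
My plan is to reduce the statement to a claim about stable matchings, using Proposition \ref{propo 1 a 1 core = estables}, and then argue both inclusions separately. Since the market is one-to-one, Proposition \ref{propo 1 a 1 core = estables} gives $\mathcal{C}(R)=\mathcal{S}(R)$, so it suffices to show that $\mathcal{S}(R)=\bigcup_{P\in \mathcal{L}(R)}\mathcal{S}(P)$.

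For the inclusion $\bigcup_{P\in \mathcal{L}(R)}\mathcal{S}(P) \subseteq \mathcal{S}(R)$, I would take $P\in \mathcal{L}(R)$ and $\mu\in \mathcal{S}(P)$, and argue by contrapositive: any blocking pair $(f,w)$ of $\mu$ under $R$ satisfies $wP_f \mu(f)$ and $fP_w \mu(w)$ with strict $P_f, P_w$; because $P$ is a tie-breaking of $R$, strict preferences carry over from $R$ to $P$, so $(f,w)$ would also block $\mu$ under $P$. Likewise, individual rationality is preserved thanks to the no-indifference-to-empty-set assumption, since $\mu(c) P_c \emptyset$ under $R$ forces $\mu(c) P_c \emptyset$ under $P$. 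Hence $\mu\in \mathcal{S}(R)$.

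For the reverse inclusion $\mathcal{S}(R)\subseteq \bigcup_{P\in \mathcal{L}(R)}\mathcal{S}(P)$, which is the substantive direction, I would fix $\mu\in \mathcal{S}(R)$ and explicitly construct a strict tie-breaking $P\in \mathcal{L}(R)$ adapted to $\mu$. The key idea is the following tie-breaking rule: for every agent $c\in F\cup W$, break the ties in $R_c$ by placing $\mu(c)$ strictly at the top of its own $R_c$-indifference class (and breaking any remaining ties arbitrarily). This yields a well-defined linear order $P_c$ refining $R_c$, and the profile $P=(P_c)_{c\in F\cup W}$ lies in $\mathcal{L}(R)$. I would then verify that $\mu\in \mathcal{S}(P)$: individual rationality transfers from $R$ to $P$ by the no-indifference-to-empty-set assumption, and if $(f,w)$ were a blocking pair of $\mu$ under $P$, then $wP_f \mu(f)$ would, by construction of $P_f$, force $wP_f \mu(f)$ already under $R$ (because anything $R_f$-indifferent to $\mu(f)$ was placed strictly below $\mu(f)$ in $P_f$); analogously $fP_w \mu(w)$ would hold under $R$. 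Thus $(f,w)$ would block $\mu$ under $R$, contradicting $\mu\in \mathcal{S}(R)$.

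The only genuinely delicate point is the construction of the tie-breaking in the second direction; everything else is bookkeeping that depends on two facts that have already been established or assumed: Proposition \ref{propo 1 a 1 core = estables} and the no-indifference-to-empty-set assumption. Once the ``lift $\mu(c)$ to the top of its indifference class'' rule is written down, both directions collapse to the preservation of strict preferences under tie-breaking.
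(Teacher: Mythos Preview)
Your proposal is correct and follows the same structure as the paper's proof: reduce $\mathcal{C}(R)$ to $\mathcal{S}(R)$ via Proposition~\ref{propo 1 a 1 core = estables}, and then identify $\mathcal{S}(R)$ with $\bigcup_{P\in\mathcal{L}(R)}\mathcal{S}(P)$. The only difference is that the paper obtains this last identification by citing Proposition~1 of \cite{irving2000hospitals}, whereas you supply the standard self-contained argument (the ``place $\mu(c)$ at the top of its indifference class'' tie-breaking for the nontrivial inclusion); one minor slip is that in the $\supseteq$ direction your individual-rationality remark should run from $P$ to $R$ rather than from $R$ to $P$, but this is harmless since the no-indifference-to-the-empty-set assumption makes both directions immediate.
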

\begin{proof}
Let $R$ be a one-to-one market. By Proposition \ref{propo 1 a 1 core = estables}, $\mathcal{C}(R)=\mathcal{S}(R).$ By Proposition 1 in \cite{irving2000hospitals}, $\mathcal{S}(R)=\bigcup_{P \in \mathcal{L}(R)}\mathcal{S}(P),$ 
and the result follows. 
\end{proof}\begin{figure}[h!]

\centering

\begin{tikzpicture}[set/.style={fill=blue,fill opacity=0.12}]

\draw[set] (0,0) ellipse (2cm and 1cm);
 
\draw[set] (0,0) ellipse (3cm and 2cm);
 
\draw[set,
    ] (0,0) ellipse (4cm and 3cm);
 
\draw[set,
    ] (0,0) ellipse (5cm and 4cm);

\node[] at (0,0) {$\mathcal{C}_{SS}=\mathcal{SSS}$};    
\node[] at (0,-1.5) {$\mathcal{C}_{S}=\mathcal{SS}$};     
\node[] at (0,-2.5) {$\mathcal{S}$};
\node[] at (0,-3.5) {$\mathcal{C}$};
 
\end{tikzpicture}

\caption{\textsf{Inclusions among different cores and stability notions in the many-to-one setting.}}
\label{Inclusions}
\end{figure}
\section{Concluding remarks}\label{Concludings}

This paper presents, for a many-to-one model with indifferences, three notion of cores and its relationships with the three already known notions of stability. All the results and relations are summarized in Figure \ref{Inclusions}.

An interesting avenue for future research is to investigate whether our results are still valid in a many-to-one  model with indifferences in which firms' preferences are substitutable instead of responsive.


\begin{thebibliography}{13}
\newcommand{\enquote}[1]{``#1''}
\expandafter\ifx\csname natexlab\endcsname\relax\def\natexlab#1{#1}\fi

\bibitem[\protect\citeauthoryear{Abdulkadiro{\u{g}}lu and
  Pathak}{Abdulkadiro{\u{g}}lu and Pathak}{2009}]{abdulkadirouglu2009strategy}
\textsc{Abdulkadiro{\u{g}}lu, A. and A.~Pathak, P.and~Roth} (2009):
  \enquote{Strategy-Proofness versus Efficiency in Matching With Indifferences:
  Redesigning the NYC high school match,} \emph{American Economic Review}, 99,
  1954--1978.

\bibitem[\protect\citeauthoryear{Erdil and Ergin}{Erdil and
  Ergin}{2017}]{erdil2017two}
\textsc{Erdil, A. and H.~Ergin} (2017): \enquote{Two-sided matching with
  indifferences,} \emph{Journal of Economic Theory}, 171, 268--292.

\bibitem[\protect\citeauthoryear{Gale and Shapley}{Gale and
  Shapley}{1962}]{gale1962college}
\textsc{Gale, D. and L.~Shapley} (1962): \enquote{College admissions and the
  stability of marriage,} \emph{The American Mathematical Monthly}, 69, 9--15.

\bibitem[\protect\citeauthoryear{Ghosal, Kunysz, and Paluch}{Ghosal
  et~al.}{2016}]{kunysz2016characterisation}
\textsc{Ghosal, P., A.~Kunysz, and K.~Paluch} (2016): \enquote{Characterisation
  of strongly stable matchings,} in \emph{Proceedings of the Twenty-seventh
  Annual ACM-SIAM Symposium on Discrete Algorithms}, SIAM, 107--119.

\bibitem[\protect\citeauthoryear{Irving}{Irving}{1994}]{irving1994stable}
\textsc{Irving, R.~W.} (1994): \enquote{Stable marriage and indifference,}
  \emph{Discrete Applied Mathematics}, 48, 261--272.

\bibitem[\protect\citeauthoryear{Irving, Manlove, and Scott}{Irving
  et~al.}{2000}]{irving2000hospitals}
\textsc{Irving, R.~W., D.~F. Manlove, and S.~Scott} (2000): \enquote{The
  hospitals/residents problem with ties,} in \emph{Scandinavian Workshop on
  Algorithm Theory}, Springer, 259--271.

\bibitem[\protect\citeauthoryear{Juarez, Neme, and Oviedo}{Juarez
  et~al.}{2021}]{juarez2021marriage}
\textsc{Juarez, N., P.~A. Neme, and J.~Oviedo} (2021): \enquote{Marriage market
  with indifferences: A linear programming approach,} \emph{Journal of the
  Operations Research Society of China}, 1--24.

\bibitem[\protect\citeauthoryear{Kunysz}{Kunysz}{2018}]{kunysz2018algorithm}
\textsc{Kunysz, A.} (2018): \enquote{An Algorithm for the Maximum Weight
  Strongly Stable Matching Problem,} in \emph{29th International Symposium on
  Algorithms and Computation (ISAAC 2018)}, Schloss Dagstuhl-Leibniz-Zentrum
  fuer Informatik.

\bibitem[\protect\citeauthoryear{Kwanashie and Manlove}{Kwanashie and
  Manlove}{2014}]{kwanashie2014integer}
\textsc{Kwanashie, A. and D.~F. Manlove} (2014): \enquote{An integer
  programming approach to the hospitals/residents problem with ties,} in
  \emph{Operations Research Proceedings 2013}, Springer, 263--269.

\bibitem[\protect\citeauthoryear{Manlove}{Manlove}{2002}]{manlove2002structure}
\textsc{Manlove, D.~F.} (2002): \enquote{The structure of stable marriage with
  indifference,} \emph{Discrete Applied Mathematics}, 122, 167--181.

\bibitem[\protect\citeauthoryear{Roth and Sotomayor}{Roth and
  Sotomayor}{1990}]{roth1992two}
\textsc{Roth, A. and M.~Sotomayor} (1990): \emph{Two-Sided Matching: A Study in
  Game-Theoretic Modeling and Analysis}, Cambidge University Press, Cambridge.

\bibitem[\protect\citeauthoryear{Roth}{Roth}{1985}]{roth1985common}
\textsc{Roth, A.~E.} (1985): \enquote{Common and conflicting interests in
  two-sided matching markets,} \emph{European Economic Review}, 27, 75--96.

\bibitem[\protect\citeauthoryear{Sotomayor}{Sotomayor}{2011}]{sotomayor2011pareto}
\textsc{Sotomayor, M.} (2011): \enquote{The Pareto-stability concept is a
  natural solution concept for discrete matching markets with indifferences,}
  \emph{International Journal of Game Theory}, 40, 631--644.

\end{thebibliography}
\end{document}